\newtheorem{theorem}{Theorem} 
\newtheorem{property}[theorem]{Property}
\newtheorem{proposition}[theorem]{Proposition}
\newtheorem{lemma}[theorem]{Lemma}
\newtheorem{corollary}[theorem]{Corollary}
\newtheorem{definition}[theorem]{Definition}
\newcommand{\barb}[1]{\mathop{{\Downarrow}_{#1}}}
\newcommand{\sbarb}[1]{\mathop{{\downarrow}_{#1}}}
\newcommand{\aRel}{\ensuremath{\mathrel{{\mathcal R }}}}
\newcommand{\ar}[1]{\mathrel{\stackrel{#1}{\longrightarrow}}}
\newcommand{\dar}[1]{\mathrel{\stackrel{#1}{\Longrightarrow}}}
\newcommand{\cxtleq}{\preceq_{c}}
\newcommand{\si}{\preceq_{s}}
\newcommand{\lleq}{\preceq_{l}}
\newcommand{\unit}{\textbf{1}}
\newcommand{\tensor}{\mathop{\otimes}}
\newcommand{\lolli}{\mathop{\multimap}}
\newcommand{\bang}{!}
\newcommand{\reduce}{\rightsquigarrow}
\DeclareMathOperator{\with}{\&}
\newcommand{\D}{\Delta}
\newcommand{\G}{\Gamma}
\newcommand{\lseq}[3][]{{#2} \vdash_{#1} {#3}}
\newcommand{\sep}{\;\mid\;}
\newcommand{\rname}[1]{\ensuremath{#1}}
\newcommand{\sset}[1]{\{ {#1}  \}  } 
\title{
Relating Reasoning Methodologies in Linear Logic and Process Algebra%
\thanks{Partially supported by the Qatar National Research Fund under grant NPRP 09-1107-1-168, the Natural Science Foundation of China under grant 61173033, and the Funda\c{c}\~ao para a 
Ci\^encia e a Tecnologia (Portuguese Foundation for Science and Technology)
through the Carnegie Mellon Portugal Program under Grant NGN-44.
}}
\author{Yuxin Deng
        \institute{Carnegie Mellon University\\
                   and Shanghai Jiao Tong University}
        \email{yuxin@cs.cmu.edu}
\and    Iliano Cervesato
        \institute{Carnegie Mellon University\\
                   Doha, Qatar}
        \email{iliano@cmu.edu}
\and    Robert J. Simmons
        \institute{Carnegie Mellon University\\
                   Pittsburgh, PA}
        \email{rjsimmon@cs.cmu.edu}
}
\begin{document}
\maketitle

\begin{abstract}
  We show that the proof-theoretic notion of logical preorder coincides with
  the process-theoretic notion of contextual preorder for a CCS-like calculus
  obtained from the formula-as-process interpretation of a fragment of linear
  logic.  The argument makes use of other standard notions in process algebra,
  namely a labeled transition system and a coinductively defined simulation
  relation.  This result establishes a connection between an approach to
  reason about process specifications and a method to reason about logic
  specifications.
\end{abstract}

\section{Introduction}
\label{sec:intro}

By now, execution-preserving relationships between (fragments of) linear logic
and (fragments of) process algebras are well-established
(see~\cite{cervesato09relating} for an overview).  Abramsky observed early on
that linear cut elimination resembles reduction in CCS and the
$\pi$-calculus~\cite{Mil89}, thereby identifying processes with (some) linear
proofs and establishing the \emph{process-as-term}
interpretation~\cite{abramsky94tcl}.  The alternative
\emph{process-as-formula} encoding, pioneered by Miller around the same
time~\cite{Miller92elp}, maps process constructors to logical connectives and
quantifiers, with the effect of relating reductions in process algebra with
proof steps, in the same way that logic programming achieves computation via
proof search.  This interpretation has been used extensively in a multitude of
domains~\cite{Cervesato00fmcs, cervesato02concurrent, cervesato09relating,
  Miller92elp}, e.g., programming languages and security.

Not as well established is the relationship between the rich set of notions
and techniques used to reason about process specifications and the equally
rich set of techniques used to reason about (linear) logic.  Indeed, a
majority of investigations have attempted to reduce some of the behavioral
notions that are commonplace in process algebra to derivability within logic.
For example, Miller identified a fragment of linear logic that could be used
to observe traces in his logical encoding of the $\pi$-calculus, thereby
obtaining a language that corresponds to the Hennessy-Milner modal logic, which
characterizes observational equivalence~\cite{Miller92elp}.  A similar
characterization was made in~\cite{lincoln91tr}, where a sequent $\Gamma
\vdash \Delta$ in a classical logic augmented with constraints was seen as
process state $\Gamma$ passing test $\Delta$.  Extensions of linear logic were
shown to better capture other behavioral relations: for example, adding
definitions allows expressing simulation as the derivability of a linear
implication~\cite{mcdowell03tcs}, but falls short of bisimulation, for which a
nominal logic is instead an adequate formalism~\cite{tiu04fguc}.

This body of work embeds approaches for reasoning \emph{about} process
specifications (e.g., bisimulation or various forms of testing) into methods
for reasoning \emph{with} logic (mainly derivability).  Little investigation
has targeted notions used to reason about logic (e.g., proof-theoretic
definitions of equivalence). 

This paper outlines one such relationship --- between the inductive methods
used to reason about logic and the coinductive methods used to reason about
process calculi.  On the linear logic side, we focus on the
inductively-defined notion of logical preorder; this preorder is novel in the
sense that it is a natural and proof-theoretic way of relating \emph{contexts}
to other contexts.  On the process-algebraic side, we consider an extensional
behavioral relation adapted from the standard coinductive notion of contextual
preorder.  We prove that, for a fragment of linear logic and a matching
process calculus, these notions coincide, and we hope in future work to extend
this result to a larger fragment of intuitionistic linear logic. Our proofs
rely on other standard process algebraic notions as stepping stones, namely
simulation and labeled transition systems.

The rest of the paper is organized as follows.  In Section~\ref{sec:ll}, we
briefly review the fragment of linear logic we are focusing on and define the
logical preorder.  Then, in Section~\ref{sec:ctx}, we recall its standard
process-as-formula interpretation and define the contextual preorder.  In
Section~\ref{sec:lts}, we prove their equivalence through the intermediary of
a simulation preorder defined on the basis of a labeled transition system.
Full proofs of all results in this paper can be found in the accompanying
technical report~\cite{deng11tr}.

\section{Logical Preorder}
\label{sec:ll}

\begin{figure}
\fbox{%
\parbox{0.98\linewidth}{%
\[
\infer[\it init]
{\mathstrut
 \lseq{\G;a}{a}
}
{}
\qquad
\infer[\it clone]
{\mathstrut
 \lseq{\G,A;\D}{C}
}
{\lseq{\G,A;\D,A}{C}}
\]

\[
\infer[{\tensor}R]
{\mathstrut
 \lseq{\G;\D_1, \D_2}{A \tensor B}
}
{\mathstrut
 \lseq{\G;\D_1}{A}
&
 \lseq{\G;\D_2}{B}
}
\qquad
\infer[{\tensor}L]
{\mathstrut
 \lseq{\G;\D, A \otimes B}{C}
}
{\mathstrut
 \lseq{\G;\D, A, B}{C}
}
\qquad
\infer[{\bf 1}R]
{\mathstrut
 \lseq{\G;\cdot}{{\bf 1}}
}
{}
\qquad
\infer[{\bf 1}L]
{\mathstrut
 \lseq{\G;\D, {\bf 1}}{C}
}
{\mathstrut
 \lseq{\G;\D}{C}
}
\]

\[
\infer[{\with}R]
{\mathstrut
 \lseq{\G;\D}{A \with B}
}
{\mathstrut
 \lseq{\G;\D}{A}
&
 \lseq{\G;\D}{B}
}
\qquad
\infer[{\with}L_i]
{\mathstrut
 \lseq{\G;\D, A_1 \with A_2}{C}
}
{\mathstrut
 \lseq{\G;\D, A_i}{C}
}
\qquad
\infer[{\top}R]
{\mathstrut
 \lseq{\G;\D}{\top}
}
{}
\qquad
\mbox{\it (no rule ${\top}L$)}
\]

\[
\infer[{\lolli}R]
{\mathstrut
 \lseq{\G;\D}{a \lolli B}
}
{\mathstrut
 \lseq{\G;\D, a}{B}
}
\qquad
\infer[{\lolli}L]
{\mathstrut
 \lseq{\G;\D_1, \D_2, a \lolli B}{C}
}
{\mathstrut
 \lseq{\G;\D_1}{a}
&
 \lseq{\G;\D_2, B}{C}
}
\qquad
\infer[!R]
{\mathstrut
 \lseq{\G;\cdot}{!A}
}
{\mathstrut
 \lseq{\G;\cdot}{A}
}
\qquad
\infer[!L]
{\mathstrut
 \lseq{\G;\D, !A}{C}
}
{\mathstrut
 \lseq{\G,A;\D}{C}
}
\]
}}
\caption{A Fragment of Dual Intuitionistic Linear Logic}
\label{fig:mall}
\end{figure}

The fragment of linear logic considered in this paper is given by the
following grammar:
$$
\begin{array}{lrl}
  A, B, C
 &  ::=
 &      a
   \sep \unit
   \sep A \tensor B
   \sep \top
   \sep A \with B
   \sep a \lolli B
   \sep !A
\end{array}
$$
where $a$ is an atomic formula.  This language is propositional and, as often
the case in investigations of CCS-like process algebras~\cite{Cervesato00fmcs,
  cervesato02concurrent, cervesato09relating}, the antecedent of linear
implication is restricted to atomic formulas (see the remarks in
Section~\ref{sec:concl} about lifting these constraints).

Derivability for this language is given in terms of dual intuitionistic linear
logic (DILL) sequents~\cite{barber96dual, cervesato09relating} of the form
$\lseq{\G;\D}{A}$, where the \emph{unrestricted context} $\G$ and the
\emph{linear context} $\D$ are multisets of formulas.  Formally, they are
defined by the productions %
$\G, \D ::= \cdot \sep \D, A$ %
where ``$\cdot$'' represents the empty context, and ``$\D,A$'' is the context
obtained by adding the formula $A$ to the context $\D$.  As usual, we
tacitly treat ``$,$'' as an associative and commutative context union operator
``$\D_1,\D_2$'' with ``$\cdot$'' as its unit.

The fairly standard inference rules defining derivability are given in
Figure~\ref{fig:mall}.  A DILL sequent $\lseq{\G;\D}{A}$ corresponds to
$\lseq{\bang\G,\D}{A}$ in Girard's original
presentation~\cite{girard87linear}.  We take the view, common in practice,
that the context part of the sequent $(\G;\D)$ represents the state of some
system component and that the consequent $A$ corresponds to some property
satisfied by this system.

We will be interested in a relation, the logical preorder, that compares
specifications on the basis of the properties they satisfy, possibly after the
components they describe are plugged into a larger system.  This relation,
written $\preceq_l$, is given by the following definition.


\begin{definition}[Logical preorder]
\label{def:logical-preorder}
The \emph{logical preorder} is the smallest relation $\preceq_l$ 
such that $(\G_1;\D_1) \lleq(\G_2;\D_2)$ if, for all 
$\G'$, $\D'$, and $C$, we have that 
$\lseq{(\G',\G_1);(\D',\D_1)}{C}$ implies
$\lseq{(\G',\G_2);(\D',\D_2)}{C}$.
\qed
\end{definition}

This relation is reflexive and transitive, and therefore a
preorder~\cite[Theorem 2.5]{deng11tr}; 
we could define logical equivalence as the symmetric
closure of $\preceq_l$.  The above definition is extensional in the sense that
it refers to all contexts $\G'$ and $\D'$ and formulas $C$.  It has also an
inductive characterization based on derivability~\cite[Theorem 2.8]{deng11tr}:

\begin{property}
$(\G_1;\D_1) \lleq (\G_2;\D_2)$ \ iff \
$\lseq{\G_2;\D_2}{\bigotimes!\G_1 \,\tensor\: \bigotimes\D_1}$
\end{property}
Here, $\bigotimes\D_1$ denotes the conjunction of all formulas in $\D_1$ (or
$\unit$ if it is empty) and $!\G_1$ is the linear context obtained by
prefixing every formula in $\G_1$ with ``$!$''.

The logical preorder has other interesting properties, such 
as \emph{harmony}~\cite[Proposition 2.6]{deng11tr}:

\begin{property}[Harmony]
  $\lseq{\G;\D}{A}$ if and only if $(\cdot;A) \preceq_l (\G; \D)$.
\qed
\end{property}
In the accompanying technical report, we show that a deductive system
satisfies harmony if and only if the rules of \emph{identity} and \emph{cut}
are admissible:
$$
\infer[\it identity]
{\lseq{\G;A}{A}}
{}
\qquad
\infer[\it cut]
{\lseq{(\G,\G');(\D,\D')}{C}}
{\lseq{\G;\D}{A} & \lseq{\G';\D',A}{C}}
$$
In particular, this means that harmony holds not only for our restricted
fragment of DILL, but for full DILL and most other syntactic fragments of DILL
as well \cite[Section 2.3]{deng11tr}.

\section{Contextual Preorder}
\label{sec:ctx}

The subset of linear logic just introduced has a natural interpretation as a
fragment of CCS~\cite{Mil89} with CSP-style internal choice~\cite{Hoa85}. It
is shown in Figure~\ref{fig:paf}.  We will now switch to this reading, which
is known as the \emph{conjunctive process-as-formula interpretation} of linear
logic~\cite{Miller92elp, cervesato09relating}.  Therefore, for most of the
rest of this section, we understand $A$ as a process.

Under this reading, contexts $(\G;\D)$ are \emph{process states}, i.e.,
systems of parallel processes understood as the parallel composition of each
process in $\D$ and, after restoring the implicit replication, in $\G$.  In
process algebra, parallel composition is considered associative and
commutative and has the null process as its unit.  This endows process states
with the following structural congruences:
$$
\begin{array}{rcl@{\hspace{3em}}rcl}
   (\G;\:\D,\cdot)          & \equiv & (\G;\:\D)
 & (\G,\cdot;\: \D)         & \equiv & (\G;\:\D)
\\ (\G;\:\D_1,\D_2)         & \equiv & (\G;\:\D_2,\D_1)
 & (\G_1,\G_2;\: \D)        & \equiv & (\G_2,\G_1;\: \D)
\\ (\G;\:\D_1,(\D_2,\D_3))  & \equiv & (\G;\:(\D_1,\D_2),\D_3)
 & (\G_1,(\G_2,\G_3);\: \D) & \equiv & ((\G_1,\G_2),\G_3;\: \D)
\\&&
 & (\G,A,A;\: \D)           & \equiv & (\G,A;\: \D)
\end{array}
$$
which amount to asking that $\D$ and $\G$ be commutative monoids.  The last
equality on the right merges identical replicated processes, making $\G$ into
a set.  In the following, we will always consider process states modulo this
structural equality, and therefore treat equivalent states as syntactically
identical.

Opening a parenthesis back into logic, it easy to prove that structurally
equivalent context pairs are logically equivalent.  In symbols, if
$(\G_1;\D_1) \equiv (\G_2;\D_2)$, then $(\G_1;\D_1) \lleq (\G_2;\D_2)$ and
$(\G_2;\D_2) \lleq (\G_1;\D_1)$.

The analogy with CCS above motivates the reduction relation $\reduce$ between
process states defined in Figure~\ref{fig:mall2}.  A formula $A \tensor B$
(parallel composition) transitions to the two formulas $A$ and $B$ in
parallel, for instance, and a formula $A \with B$ (choice) either
transitions to $A$ or to $B$.  The rule corresponding to implication is also
worth noting: a formula $a \lolli B$ can interact with an atomic formula $a$
to produce the formula $B$; we think of the atomic formula $a$ as sending a
message asynchronously and $a \lolli B$ as receiving that message.  We write
$\reduce^*$ for the reflexive and transitive closure of $\reduce$.

\begin{figure}[t]
\begin{minipage}[t]{0.61\linewidth}
\fbox{%
\parbox[t]{0.90\linewidth}{%
\vspace*{-1.5ex}%
\[
\begin{array}{rp{18em}}
   a           & atomic process that sends $a$
\\ A \tensor B & process that forks into processes $A$ and $B$
\\ \unit       & null process
\\ A_1 \with A_2 & process that can behave either as $A_1$ or as $A_2$
\\ \top        & stuck process
\\ a \lolli B  & process that receives $a$ and continues as $B$
\\ \bang{A}    & any number of copies of process $A$
\end{array}\vspace*{-1.5ex}
\]
}}
\caption{Process-as-formula Interpretation}
\label{fig:paf}
\end{minipage}
\begin{minipage}[t]{0.38\linewidth}
\fbox{%
\parbox[t]{0.98\linewidth}{%
\vspace*{-4.2ex}%
\[
\begin{array}{rcl} 
\\ (\G;\:\D, A \tensor B)   & \reduce & (\G;\:\D, A, B)  
\\ (\G;\:\D, \unit)         & \reduce & (\G;\:\D)        
\\ (\G;\:\D, A_1 \with A_2) & \reduce & (\G;\:\D, A_i)   
\\ \multicolumn{3}{c}{\text{\emph{(No rule for $\top$)}}}
\\ (\G;\:\D, a, a \lolli B) & \reduce & (\G;\:\D, B)     
\\ (\G;\: \D, \bang{A})     & \reduce & (\G, A;\: \D)    
\\ (\G,A;\: \D)             & \reduce & (\G,A;\: \D,A)   
\end{array}\vspace*{-1.5ex}
\]
}}
\caption{Transitions}
\label{fig:mall2}
\end{minipage}
\end{figure}

Intuitively, two systems of processes are contextually equivalent if they
behave in the same way when composed with any given process.  We will be
interested in the asymmetric variant of this notion, a relation known as the
contextual preorder.  We understand ``behavior'' as the ability to produce the
same messages.  To model this, we write $(\G;\D)\sbarb{a}$ whenever $a\in\D$
and $(\G;\D)\barb{a}$ whenever $(\G;\D)\reduce^*(\G';\D')$ for some $(\G';\D')$
with $(\G';\D')\sbarb{a}$.  We also define the \emph{composition} of two
states $(\G_1;\D_1)$ and $(\G_2;\D_2)$, written $((\G_1;\D_1), (\G_2;\D_2))$,
as the state $((\G_1, \G_2); (\D_1,\D_2))$.  

In defining our contextual preorder, we will also require that partitions of
systems of processes behave congruently --- a related notion, Markov
simulation, is used in probabilistic process algebras~\cite{DH11a}.  We
formally capture it by defining that a \emph{partition} of a state $(\G;\D)$
is any pair of states $(\G_1;\D_1)$ and $(\G_2;\D_2)$ such that $(\G;\D) \equiv
((\G_1;\D_1),(\G_2;\D_2))$.

\begin{definition}[Contextual preorder]
  Let $\aRel$ be a binary relation over states.  We say that $\aRel$ is
  \begin{description}
  \item[\emph{barb-preserving}]%
    if, whenever $(\G_1;\D_1) \aRel (\G_2;\D_2)$ and $(\G_1;\D_1)\sbarb{a}$,
    we have that $(\G_2;\D_2)\barb{a}$ for any $a$.

  \item[\emph{reduction-closed}]%
    if $(\G_1;\D_1) \aRel (\G_2;\D_2)$ and $(\G_1;\D_1) \reduce
    (\G_1';\D'_1)$ implies $(\G_2;\D_2) \reduce^* (\G_2';\D'_2)$ and
    \linebreak[4]
    $(\G_1';\D'_1) \aRel (\G_2';\D'_2)$  for some $(\G_2';\D'_2)$.

  \item[\emph{compositional}]%
    if $(\G_1;\D_1) \aRel (\G_2;\D_2)$ implies $((\G_1;\D_1), (\G;\D)) \aRel
    ((\G_2;\D_2), (\G;\D))$ for all $(\G;\D)$.

 \item[\emph{partition-preserving}]%
    if $(\G_1;\D_1) \aRel (\G_2;\D_2)$ implies that
    \begin{enumerate}
    \item%
      if $\D_1 = \cdot$, then $(\G_2;\D_2)\reduce^* (\G_2';\cdot)$ and
      $(\G_1;\cdot) \aRel (\G'_2;\cdot)$,
    \item %
      for all $(\G_1';\D'_1)$ and $(\G_1'';\D''_1)$, if
      $(\G_1;\D_1)=((\G_1';\D'_1),(\G_1'';\D''_1))$ then
      there exists $(\G_2';\D'_2)$ and $(\G_2'';\D''_2)$ such that
      $(\G_2;\D_2)\reduce^* ((\G_2';\D'_2),(\G_2'';\D''_2))$ and furthermore
      $(\G_1';\D'_1) \aRel (\G_2';\D'_2)$  and $(\G_1'';\D''_1) \aRel
      (\G_2'';\D''_2)$,
    \end{enumerate}
  \end{description}
The \emph{contextual preorder}, denoted by $\cxtleq$, is the largest
relation over processes which is barb-preserving, reduction-closed,
compositional and partition-preserving.
\qed
\end{definition}

The contextual preorder is indeed reflexive and transitive, and therefore a
preorder~\cite[Theorem~3.7]{deng11tr}. 
In contrast to the proof of analogous property of the logical
preorder, the proof of this result is coinductive.

Contextual equivalence, which is the symmetric closure of the contextual
preorder, has been widely studied in concurrency theory, though its appearance
in linear logic seems to be new.  It is also known as \emph{reduction barbed
  congruence} and used in a variety of process calculi~\cite{ht92, RathkeS08,
  FournetG05, DH11a}.

\section{Correspondence via Simulation}
\label{sec:lts}

In this section, we show that the logical and the contextual preorders are the
same relation.  A direct proof eluded us, as the inductive reasoning
techniques that underlie derivations (on which $\preceq_l$ is based) do not
play nicely with the intrinsically coinductive arguments that are natural for
$\cxtleq$.  Instead, our proof uses a second relation, the simulation
preorder, as a stepping stone.  This intermediary relation is also coinductive,
but it is relatively easy to show that it is equivalent to the logical
preorder.  

The definition of the simulation preorder relies on the labeled transition
system in Figure~\ref{fig:lts-!}.  It defines the transition judgment
\smash{$(\G_1;\D_1) \ar{\beta} (\G_2;\D_2)$} between states $(\G_1;\D_1)$ and
$(\G_2;\D_2)$.  Here, $\beta$ is a label.  We distinguish ``non-receive''
labels, denoted $\alpha$, as either the silent action $\tau$ or a label $!a$
for atomic formula $a$ --- it represents a send action of $a$.  Generic labels
$\beta$ extend them with receive actions $?a$.  We write \smash{$\dar{\tau}$}
for the reflexive and transitive closure of \smash{$\ar{\tau}$}, and
\smash{$(\G_1;\D_1) \dar{\beta} (\G_2;\D_2)$} for \smash{$(\G_1;\D_1)
  \dar{\tau}\ar{\beta}\dar{\tau} (\G_2;\D_2)$}, if $\beta\not=\tau$.

Rule \rname{lts!?} synchronizes a send action (rule \rname{lts!}) with a
receive action (rule \rname{lts?}), thereby achieving the same effect as the
transition for $\lolli$ in Figure~\ref{fig:mall2}.  The other $\tau$
transitions in Figure~\ref{fig:lts-!}  correspond directly to reductions
(since $\top$ is the stuck process, it has no action to perform).  Indeed, the
following result holds~\cite[Lemmas 4.1 and~5.2]{deng11tr}:

\begin{property}
 $(\G_1;\D_1) \dar{\tau} (\G_2;\D_2)$ if and only if $(\G_1;\D_1) \reduce^*
 (\G_2;\D_2)$.
\qed
\end{property}

\begin{figure}[t]
\fbox{%
\parbox{0.99\linewidth}{%
$$
\infer[lts!]
 {(\G;\D, a)  \ar{!a}  (\G;\D)}
 {}
\qquad
\infer[lts?]
 {(\G;\D, a\lolli B)  \ar{?a}  (\G;\D,B)}
 {}
$$

$$
\infer[lts!?]
 {((\G_1;\D_1), (\G_2; \D_2)) \ar{\tau} ((\G_1';\D_1'), (\G_2';\D_2'))}
 {(\G_1;\D_1) \ar{!a} (\G_1';\D_1')
 &(\G_2;\D_2) \ar{?a} (\G_2';\D_2')}
$$

$$
\infer[lts\tensor]
 {(\G; \D, A \tensor B) \ar{\tau} (\G; \D, A, B)}
 {}
\qquad
\infer[lts\unit]
 {(\G; \D,\unit) \ar{\tau} (\G;\D)}
 {}
$$

$$
\infer[lts\with_i]
 {(\G; \D, A_1 \with A_2) \ar{\tau} (\G; \D, A_i)}
 {}
\qquad
\text{(No rule for $\top$)}
$$

$$
\infer[lts!A]
 {(\G; \D,\bang{A}) \ar{\tau} (\G,A;\D)}
 {}
\qquad
\infer[ltsClone]
 {(\G,A;\D) \ar{\tau} (\G,A; \D,A)}
 {}
$$
}}
\caption{Labeled Transition System}
\label{fig:lts-!}
\end{figure}

Based on the labeled transition system in Figure~\ref{fig:lts-!}, we are in a
position to give a coinductive definition of the simulation preorder.
\begin{definition}[Simulation preorder]
\label{def:sim-!}
A relation $\aRel$ between two processes represented as $(\G_1;\D_1)$ and
$(\G_2;\D_2)$ is a \emph{simulation} if $(\G_1;\D_1)\aRel (\G_2;\D_2)$ implies
that
\begin{enumerate}
\itemsep=0ex
\item%
  if $(\G_1;\D_1) \equiv (\G'_1;\cdot)$ then $(\G_2;\D_2)\dar{\tau}
  (\G_2';\cdot)$ and $(\G'_1;\cdot) \aRel (\G'_2;\cdot)$.
\item%
  if $(\G_1;\D_1)\equiv ((\G_1';\D'_1),(\G_1'';\D''_1))$ then $(\G_2;\D_2)
  \dar{\tau} ((\G_2';\D'_2), (\G_2'';\D''_2))$ for some $(\G_2';\D'_2)$ and
  $(\G_2'';\D''_2)$ such that $(\G_1';\D'_1) \aRel (\G_2';\D'_2)$ and
  $(\G_1'';\D''_1) \aRel (\G_2'';\D''_2)$.
\item%
  if $(\G_1;\D_1) \ar{\alpha} (\G_1';\D'_1)$, there exists
  $(\G_2';\D'_2)$ such that $(\G_2;\D_2) \dar{\alpha} (\G_2';\D'_2)$ and
  $(\G_1';\D'_1) \aRel (\G_2';\D'_2)$.
\item%
  if $(\G_1;\D_1) \ar{?a} (\G_1';\D'_1)$, there exists $(\G_2';\D'_2)$
  such that $(\G_2;\:\D_2,a) \dar{\tau} (\G_2';\D'_2)$
and $(\G_1';\D'_1) \aRel (\G_2';\D'_2)$.
\end{enumerate} 
We write $(\G_1;\D_1) \si (\G_2;\D_2)$ if there is some simulation $\aRel$
with $(\G_1;\D_1) \aRel (\G_2;\D_2)$.  We call $\si$ the \emph{simulation
  preorder}.
\qed
\end{definition}
The first two points of the definition ensure that a simulation is
partition-preserving.  The others characterize similarity.  The fourth is a
key bridge to the logical behavior of implication.  It is inspired by the
asynchronous bisimulation proposed in \cite{ACS08}. The intuition is that,
according to the interpretation in Figure~\ref{fig:paf}, formulas are
essentially viewed as processes in an asynchronous version of CCS\@.  The
simulation preorder defined above is reflexive, transitive (i.e., a preorder)
and compositional~\cite[Proposition~4.9, Theorem~4.11,
Proposition~5.5]{deng11tr}:

\begin{property}[$\si$ is a compositional preorder]\label{prot:comp}\hfill
\begin{itemize}
\itemsep=0pt
\item $\si$ is a preorder.
\item%
  If $(\G_1;\D_1) \si (\G_2;\D_2)$, then $((\G_1;\D_1),(\G;\D)) \si
  ((\G_2;\D_2),(\G;\D))$ for any process state $(\G;\D)$.
\qed
\end{itemize}
\end{property}

The soundness and completeness of the contextual preorder with respect to the
simulation preorder is readily established by coinduction~\cite[Theorem~4.12,
4.13, and~5.6]{deng11tr}:

\begin{theorem}\label{thm:cxt,ll}
$(\G_1;\D_1) \cxtleq (\G_2;\D_2)$ if and only if $(\G_1;\D_1) \si
(\G_2;\D_2)$.
\qed
\end{theorem}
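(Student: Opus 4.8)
The goal is to prove that the contextual preorder $\cxtleq$ and the simulation preorder $\si$ coincide. This is an ``if and only if'' statement, so I would split it into two inclusions and prove each by coinduction, exploiting the fact that both relations are defined as largest relations satisfying a collection of closure conditions.

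For the direction $\si \subseteq \cxtleq$, the plan is to show that $\si$ is itself barb-preserving, reduction-closed, compositional, and partition-preserving; since $\cxtleq$ is the \emph{largest} such relation, this inclusion follows immediately. Compositionality is already granted by Property~\ref{prot:comp}, and partition-preservation is built directly into clauses~(1) and~(2) of Definition~\ref{def:sim-!} once we translate $\dar{\tau}$ into $\reduce^*$ using the Property relating the two. Reduction-closure should follow from clause~(3): a reduction $(\G_1;\D_1)\reduce(\G_1';\D_1')$ corresponds by that same Property to a $\tau$-transition $(\G_1;\D_1)\ar{\tau}(\G_1';\D_1')$, and since $\tau$ is a non-receive label $\alpha$, clause~(3) supplies a matching weak $\tau$-transition on the right, i.e.\ $(\G_2;\D_2)\reduce^*(\G_2';\D_2')$ with the two residuals again related. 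For barb-preservation, if $(\G_1;\D_1)\sbarb{a}$ then $a\in\D_1$, so $(\G_1;\D_1)\ar{!a}(\G_1';\D_1')$ by rule \rname{lts!}; clause~(3) then yields $(\G_2;\D_2)\dar{!a}(\G_2';\D_2')$, and unfolding the weak send transition through the Property produces a reduction sequence exposing an $a$, so $(\G_2;\D_2)\barb{a}$.

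For the harder direction $\cxtleq \subseteq \si$, the plan is to verify that $\cxtleq$ satisfies the four defining clauses of a simulation; because $\si$ is the \emph{largest} simulation, this gives the inclusion. Clauses~(1) and~(2) come directly from the partition-preservation of $\cxtleq$, rewriting $\reduce^*$ as $\dar{\tau}$. Clause~(3), the simulation of a non-receive label $\alpha$, splits into the $\tau$ case and the $!a$ case: the $\tau$ case is exactly reduction-closure, and the $!a$ case requires turning barb-preservation into a labeled send that not only produces the barb but also leaves behind a correctly related residual state. The genuinely delicate clause is~(4), the receive action $?a$. Here $(\G_1;\D_1)\ar{?a}(\G_1';\D_1')$ means $\D_1$ contains some $a\lolli B$ that consumes an external $a$; I would reconstruct this synchronization by composing $(\G_1;\D_1)$ with the tiny test state $(\cdot;a)$, invoking compositionality of $\cxtleq$ to get $((\G_1;\D_1),(\cdot;a)) \cxtleq ((\G_2;\D_2),(\cdot;a))$, and then using reduction-closure on the internal $\lolli$-reduction of the composite left-hand side to extract the required weak $\tau$-transition from $(\G_2;\:\D_2,a)$ to some $(\G_2';\D_2')$ with the residuals related.

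I expect clause~(4) to be the main obstacle. The difficulty is that $\cxtleq$ offers no primitive notion of a receive action; its only handles on $\lolli$-behavior come indirectly, through reductions of composite states after a message $a$ has been supplied by the environment via composition. Matching the label $?a$ on the left against a purely internal computation from $(\G_2;\:\D_2,a)$ on the right therefore hinges on carefully relating the synchronization step $\rname{lts!?}$ to the $\lolli$-reduction rule of Figure~\ref{fig:mall2}, and on checking that composing with the probe $(\cdot;a)$ introduces no spurious behavior that could let the right-hand side cheat. Once clause~(4) is settled, the remaining clauses are routine translations between $\reduce$ and $\ar{\tau}$ justified by the stated Property equating $\dar{\tau}$ with $\reduce^*$, together with the already-established compositionality of $\si$.
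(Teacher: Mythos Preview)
Your overall strategy matches the paper's: both directions are handled coinductively, with the inclusion $\si \subseteq \cxtleq$ obtained by verifying that $\si$ enjoys the four closure conditions defining $\cxtleq$ (here compositionality of $\si$, Property~\ref{prot:comp}, is exactly the ingredient the paper flags as non-trivial), and the converse by showing that $\cxtleq$ satisfies the clauses of Definition~\ref{def:sim-!}. Your treatment of clause~(4) via the probe $(\cdot;a)$ and reduction-closure is the intended argument.

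One point deserves more care than you give it. In clause~(3) for a send label $!a$, barb-preservation alone is not enough: it tells you that $(\G_2;\D_2)$ can eventually expose an $a$, but says nothing about the residual being $\cxtleq$-related to $(\G_1;\D_1')$. The way out is to combine partition-preservation with barb-preservation. Writing $(\G_1;\D_1)\equiv((\G_1;\D_1'),(\cdot;a))$, partition-preservation yields $(\G_2;\D_2)\reduce^*((\G_2^L;\D_2^L),(\G_2^R;\D_2^R))$ with $(\G_1;\D_1')\cxtleq(\G_2^L;\D_2^L)$ and $(\cdot;a)\cxtleq(\G_2^R;\D_2^R)$; the second relation, together with barb-preservation and clause~(1) of partition-preservation (to clear any leftover linear material), lets you extract the $!a$-step from the right-hand component while keeping the residual of the form $((\G_2^L;\D_2^L),(\G';\cdot))$, which is then related to $(\G_1;\D_1')$ by compositionality and transitivity of $\cxtleq$. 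So the ``correctly related residual'' you mention does come for free once you invoke the partition clause rather than barbs alone; contrary to your expectation, this case is at least as delicate as clause~(4).
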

\noindent
Here the soundness proof is non-trivial, as it heavily relies on the
compositionality property, shown in the second part in
Property~\ref{prot:comp}.

Relating the simulation preorder and the logical preorder is more involved,
and is where the inductive approach to reasoning about the former meets the
coinductive arguments normally used with the latter.  The rest of this
subsection is indeed devoted to showing the coincidence of $\lleq$ and $\si$.
We first need two technical lemmas whose simple proofs can be found
in~\cite{deng11tr}.

\begin{lemma}[Weakening]
\label{lem:si.ext}
$(\G;\D) \si ((\G,\G');\D)$ for any $\G'$.
\end{lemma}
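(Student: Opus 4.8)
The plan is to prove the statement directly by coinduction, exhibiting a single simulation that contains the desired pair. Since $\si$ is the union of all simulations (Definition~\ref{def:sim-!}), it suffices to produce a relation $\aRel$ that is a simulation and satisfies $(\G;\D)\aRel((\G,\G');\D)$. I would take
\[
\aRel = \sset{\,((\G;\D),\ ((\G,\G');\D)) \ \mid\ \G,\G',\D \text{ arbitrary}\,},
\]
so that the right-hand state always differs from the left only by an additional block $\G'$ of unrestricted formulas. Note that the choice $\G'=\cdot$ shows that $\aRel$ contains the diagonal, a fact I will need when matching subcomponents.

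I would then verify the four clauses of Definition~\ref{def:sim-!} for an arbitrary pair $(\G;\D)\aRel((\G,\G');\D)$. The guiding intuition is that the extra context $\G'$ on the right is purely passive: it grants strictly more resources and can simply be carried along while the right-hand state mimics whatever the left does. For clause~1, if $\D=\cdot$ then $((\G,\G');\cdot)$ already has empty linear context, so no $\tau$ step is needed and $(\G;\cdot)\aRel((\G,\G');\cdot)$ holds by construction. For clause~2, given a partition $(\G;\D)\equiv((\G_a;\D_a),(\G_b;\D_b))$ I would split the right state as $(((\G_a,\G');\D_a),(\G_b;\D_b))$, whose composition is exactly $((\G,\G');\D)$; the two resulting subpairs $(\G_a;\D_a)\aRel((\G_a,\G');\D_a)$ and $(\G_b;\D_b)\aRel(\G_b;\D_b)$ both lie in $\aRel$, the latter via the $\G'=\cdot$ case. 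For clause~3, I would inspect each rule of Figure~\ref{fig:lts-!} producing a non-receive label ($lts!$, $lts\tensor$, $lts\unit$, $lts\with_i$, $lts!A$, $ltsClone$, and a synchronizing $lts!?$): in every case the same redex is available in $((\G,\G');\D)$, since the linear part $\D$ is identical and $\G\subseteq(\G,\G')$, so the right state fires the identical rule and the residuals remain of the form ``left, plus the block $\G'$'', hence lie in $\aRel$.

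The only clause demanding genuine care is clause~4, the receive case, together with the synchronization rule $lts!?$. If $(\G;\D)\ar{?a}(\G;\D',B)$ via $lts?$ with $\D=\D',a\lolli B$, then on the right I would feed the message into the linear context and let it react: the state $((\G,\G');\D',a\lolli B,a)$ contains both $a$ and $a\lolli B$, so $lts!$ and $lts?$ synchronize through $lts!?$ to give $((\G,\G');\D',a\lolli B,a)\ar{\tau}((\G,\G');\D',B)$, and $(\G;\D',B)\aRel((\G,\G');\D',B)$ holds by construction. The main obstacle, such as it is, lies in the bookkeeping for $lts!?$: one must check that whichever partition the left uses to justify a synchronization can be reproduced on the right simply by appending $\G'$ to one side, using that unrestricted formulas form a set and so may be duplicated across the partition without changing the composed state. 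Once all four clauses are discharged, $\aRel$ is a simulation containing $((\G;\D),((\G,\G');\D))$, and the lemma follows.
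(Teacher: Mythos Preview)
Your proposal is correct: exhibiting the relation $\aRel=\{((\G;\D),((\G,\G');\D))\}$ and verifying the four clauses of Definition~\ref{def:sim-!} is exactly the standard coinductive argument, and your case analysis (including the handling of $lts!?$ and of clause~4 via an internal synchronization on the right) goes through. The paper itself does not prove this lemma in the text---it defers to the accompanying technical report~\cite{deng11tr} with the remark that the proof is simple---so there is no in-paper argument to compare against, but your direct construction is the natural one and matches what ``simple'' suggests.
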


\begin{lemma}\label{lem:tau.si}
If $(\G_1;\D_1)\dar{\tau} (\G_2;\D_2)$, then $(\G_2;\D_2) \si (\G_1;\D_1)$.
\end{lemma}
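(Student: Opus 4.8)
The plan is to exhibit a concrete simulation containing the pair $((\G_2;\D_2),(\G_1;\D_1))$ and appeal to the coinductive Definition~\ref{def:sim-!}. The natural candidate places each state reachable by silent steps below the state it came from:
$$\mathcal{R} = \{\,((\G';\D'),(\G;\D)) \mid (\G;\D)\dar{\tau}(\G';\D')\,\}.$$
Since $\dar{\tau}$ is reflexive, the hypothesis $(\G_1;\D_1)\dar{\tau}(\G_2;\D_2)$ immediately gives $(\G_2;\D_2)\mathrel{\mathcal{R}}(\G_1;\D_1)$, so it suffices to prove that $\mathcal{R}$ is a simulation. Throughout, I fix a pair with $(\G;\D)\dar{\tau}(\G';\D')$, read $(\G';\D')$ as the challenging (left) state and $(\G;\D)$ as the responding (right) state, and rely on the single intuition that the right state can always \emph{catch up} by first performing the silent run $(\G;\D)\dar{\tau}(\G';\D')$ and only then mimicking the move of the left state.

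I would first dispatch clauses (1)–(3), which are routine. For clause (1), if $(\G';\D')\equiv(\G_0;\cdot)$ then $(\G;\D)\dar{\tau}(\G_0;\cdot)$ already holds, and $(\G_0;\cdot)\mathrel{\mathcal{R}}(\G_0;\cdot)$ by reflexivity of $\dar{\tau}$. Clause (2) is analogous: any partition of $(\G';\D')$ is matched by the identical partition of itself, each component being $\mathcal{R}$-related to itself. For clause (3), a move $(\G';\D')\ar{\alpha}(\G''';\D''')$ composes with the silent prefix $(\G;\D)\dar{\tau}(\G';\D')$ to give $(\G;\D)\dar{\alpha}(\G''';\D''')$ — absorbing the prefix into $\dar{\tau}$ when $\alpha=\tau$, and into the leading $\dar{\tau}$ of $\dar{\alpha}$ when $\alpha = {!a}$ — after which $(\G''';\D''')\mathrel{\mathcal{R}}(\G''';\D''')$ closes the clause.

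The only delicate clause is (4), the receive case, and this is where I expect the real work. A transition $(\G';\D')\ar{?a}(\G''';\D''')$ must come from rule \rname{lts?}, so $\D' = \D_0, a\lolli B$ and $(\G''';\D''') = (\G';\D_0,B)$ for some $\D_0$ and $B$; I must supply a matching silent run from the \emph{augmented} right state $(\G;\D,a)$. The key sublemma, proved by induction on the length of the silent run together with inspection of the transition rules (each $\tau$-rule acts on part of a state and leaves an untouched frame into which a spare atom can be parked), is that silent reduction is preserved by extending the linear context: $(\G;\D)\dar{\tau}(\G';\D')$ implies $(\G;\D,a)\dar{\tau}(\G';\D',a)$. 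Applying it, $(\G;\D,a)\dar{\tau}(\G';\D_0,a\lolli B,a)$, and now the parked atom $a$ can synchronize with $a\lolli B$ — either directly through the reduction $(\G';\D_0,a,a\lolli B)\reduce(\G';\D_0,B)$ of Figure~\ref{fig:mall2} combined with the stated equivalence between $\reduce^*$ and $\dar{\tau}$, or by splitting off $(\G';a)$ and firing \rname{lts!?}. Either way $(\G;\D,a)\dar{\tau}(\G';\D_0,B) = (\G''';\D''')$, and $(\G''';\D''')\mathrel{\mathcal{R}}(\G''';\D''')$ discharges clause (4). The frame-preservation sublemma is thus the one point needing genuine care; once it is in hand, every clause closes by reflexivity of $\dar{\tau}$, so $\mathcal{R}$ is a simulation and $(\G_2;\D_2)\si(\G_1;\D_1)$ follows.
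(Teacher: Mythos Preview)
Your argument is correct. The relation $\mathcal{R} = \{((\G';\D'),(\G;\D)) \mid (\G;\D)\dar{\tau}(\G';\D')\}$ is indeed a simulation, and your verification of the four clauses is sound; in particular the frame-preservation sublemma you isolate for clause~(4) holds because every $\tau$-rule in Figure~\ref{fig:lts-!} carries an inert linear frame $\D$ that can absorb the extra atom $a$. The paper itself does not give a proof of this lemma---it defers it to the technical report~\cite{deng11tr} as ``simple''---so there is no in-paper argument to compare against, but your coinductive construction is the standard and expected one for results of this shape.
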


We are now in a position to connect simulation with
provability. 
\begin{theorem}\label{thm:prov.si2}
  If $\lseq[]{\G;\D}{A}$, then $(\G;A) \si (\G;\D)$.
\end{theorem}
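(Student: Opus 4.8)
The plan is to proceed by induction on the derivation of $\lseq{\G;\D}{A}$, establishing $(\G;A)\si(\G;\D)$ in each case. The guiding observation is that in $(\G;A)\si(\G;\D)$ the right-hand state $(\G;\D)$ is the one that simulates, so the left rules and the structural rules (\emph{clone}, $!L$), which rewrite $\D$, act on the \emph{simulating} side, whereas the right rules, which build the consequent, act on the \emph{simulated} side $(\G;A)$. This dichotomy drives the whole argument. The base case \emph{init} ($\D=A=a$) is immediate from reflexivity of $\si$ (Property~\ref{prot:comp}).

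The left and structural rules form the routine half, handled uniformly with Lemma~\ref{lem:tau.si} and transitivity. Each such rule corresponds to a reduction that turns the conclusion's simulating state into that of the premise: for instance $(\G;\D,A\tensor B)\reduce(\G;\D,A,B)$ for $\tensor L$, $(\G;\D,A_1\with A_2)\reduce(\G;\D,A_i)$ for $\with L_i$, and $(\G,A;\D)\reduce(\G,A;\D,A)$ for \emph{clone}. In every case Lemma~\ref{lem:tau.si} flips this reduction (a $\dar{\tau}$-transition) into a simulation from the premise state to the conclusion state, which composes with the induction hypothesis by transitivity. The $!L$ case additionally invokes the Weakening Lemma~\ref{lem:si.ext} to reconcile the contexts $\G$ and $\G,A$. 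The only left rule needing more than this is $\lolli L$: there I would first use compositionality (Property~\ref{prot:comp}) to place $\D_1$ alongside $a\lolli B$, obtaining $(\G;a,\D_2,a\lolli B)\si(\G;\D_1,\D_2,a\lolli B)$ from the hypothesis $(\G;a)\si(\G;\D_1)$, then observe $(\G;a,\D_2,a\lolli B)\reduce(\G;\D_2,B)$ and apply Lemma~\ref{lem:tau.si} together with the second hypothesis $(\G;C)\si(\G;\D_2,B)$, chaining everything by transitivity.

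The right rules are the genuine work, since here the newly introduced connective sits on the simulated side and its transitions must be matched \emph{forward}. Concretely, each right rule reduces to a small forward fact about $\si$: that $(\G;A\tensor B)\si(\G;\D_1,\D_2)$ whenever $(\G;A)\si(\G;\D_1)$ and $(\G;B)\si(\G;\D_2)$ (for $\tensor R$, combined with compositionality); that $(\G;\unit)\si(\G;\cdot)$ (for $\unit R$); that $(\G;A\with B)\si(\G;\D)$ whenever both $(\G;A)\si(\G;\D)$ and $(\G;B)\si(\G;\D)$ (for $\with R$); that $(\G;\top)\si(\G;\D)$ for every $\D$ (for $\top R$); that $(\G;a\lolli B)\si(\G;\D)$ whenever $(\G;B)\si(\G;\D,a)$ (for $\lolli R$); and that $(\G;\bang A)\si(\G;\cdot)$ whenever $(\G;A)\si(\G;\cdot)$ (for $!R$). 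Each of these I would discharge by exhibiting an explicit witness simulation, using the induction hypothesis to match the single interesting transition of the new connective (the fork for $\tensor$, the branch for $\with$, the receive for $\lolli$, the unfolding for $\bang$) and a reflexive $\dar{\tau}$ on the simulating side.

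I expect the main obstacle to lie precisely in these witness simulations, for two reasons. First, every state with a non-empty $\G$ can always perform \emph{clone} and can always be partitioned, so each candidate relation must be closed under these moves and must discharge the partition-preservation clauses (1) and (2) of Definition~\ref{def:sim-!}; reconciling the contexts here is where the Weakening Lemma~\ref{lem:si.ext} gets used repeatedly. Second, two cases are delicate in their own right: $\lolli R$ is the one that exercises clause~(4) of the simulation --- the asynchronous-receive clause advertised as the key bridge to the logical behavior of implication --- and $!R$ requires the witness relation to absorb arbitrarily many cloned copies of $A$, so that the promise $(\G;A)\si(\G;\cdot)$ can be replayed without bound. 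These two cases, together with the bookkeeping forced by clone and partition, are where I would expect essentially all of the effort to go.
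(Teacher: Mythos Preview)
Your proposal is correct and follows essentially the same approach as the paper: rule induction on the derivation, with the left/structural rules handled via Lemma~\ref{lem:tau.si} (and Lemma~\ref{lem:si.ext} for $!L$), $\lolli L$ additionally via compositionality, and the right rules via explicit witness simulations. The paper only spells out the cases \emph{clone}, $!L$, and $!R$, but your more detailed breakdown of all rules, and in particular your identification of $!R$ (requiring a witness closed under arbitrarily many clones of $A$) and $\lolli R$ (exercising clause~(4)) as the non-routine cases, matches exactly what the paper does and anticipates.
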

\begin{proof}
We proceed by rule induction, illustrating three representative rules:
\begin{itemize}
\item (rule clone) %
  Suppose $\lseq[]{\G,B;\D}{A}$ is derived from $\lseq[]{\G,B;\D,B}{A}$. By
  induction, we have
  \begin{equation}\label{eq:clo}
    (\G,B;\:A) \si (\G,B;\: \D,B).
  \end{equation}
  From $(\G,B;\D)$ we have the transition $(\G,B;\D)\ar{\tau} (\G,B;\: \D,B)$.
  By Lemma~\ref{lem:tau.si}, we know that
  \begin{equation}\label{eq:clo2}
    (\G,B;\: \D,B) \si (\G,B;\D).
  \end{equation}
  Combining (\ref{eq:clo}), (\ref{eq:clo2}), and the transitivity of
  similarity, we obtain $(\G,B;\: A) \si (\G,B;\:\D)$.

\item (rule !L) %
  Suppose $\lseq[]{\G;\D,!B}{A}$ is derived from $\lseq[]{\G,B;\D}{A}$. By
  induction, we have
  \begin{equation}\label{eq:cloa}
    (\G,B;\: A) \si (\G,B;\:\D).
  \end{equation}
  From $(\G;\:\D,!B)$ we have the transition $(\G;\: \D,!B)\ar{\tau} (\G,B;\:
  \D)$.  By Lemma~\ref{lem:tau.si}, we know that
  \begin{equation}\label{eq:clob}
    (\G,B;\: \D) \si (\G;\: \D,!B).
  \end{equation}
  By Lemma~\ref{lem:si.ext} we have
  \begin{equation}\label{eq:cloc}
    (\G;\: A) \si (\G,B;\: A).
  \end{equation}
  Combining (\ref{eq:cloa}) -- (\ref{eq:cloc}), and the transitivity of
  similarity, we obtain $(\G;\: A) \si (\G;\: \D,!B)$.

\item (rule !R) %
  Suppose $\lseq[]{\G;\cdot}{!A}$ is derived from $\lseq[]{\G;\cdot}{A}$. By
  induction we have
  \begin{equation}\label{eq:si1}
    (\G; A) \si (\G;\cdot). 
  \end{equation}
  To conclude this case, we construct a relation $\aRel$ as follows
  \[\begin{array}{rcl}
    \aRel & = & \sset{((\G; \D,!A),\ (\G;\D)) \mid \mbox{for any $\D$}}\\
    & & \cup \sset{((\G,A;\D),\ (\G';\D')) \mid \mbox{for any $\D,\D'$
        and $\G'$with } (\G;\D) \si (\G';\D')}\\
    & & \cup \si
  \end{array}\]
  and show by coinduction that $\aRel$ is a simulation, thus
  $\mathalpha{\aRel} \subseteq \mathalpha{\si}$. Since $(\G;\: !A)\aRel
  (\G;\cdot)$, it follows that $(\G;\: !A)\si (\G;\cdot)$. 

  To see that $\aRel$ is a simulation, we pick any pair of states from
  $\aRel$. It suffices to consider the elements from the first two subsets of
  $\aRel$:
  \begin{itemize}
  \item%
    The two states are $(\G;\: \D,!A)$ and $(\G;\D)$ respectively. Let us
    consider any transition from the first state.
    \begin{itemize}
    \item%
      The transition is $(\G;\: \D,!A) \ar{\tau} (\G,A;\:\D)$. This is matched
      up by the trivial transition $(\G;\D)\dar{\tau}(\G;\D)$ because
      $(\G;\D)\si (\G;\D)$ and thus we have $(\G,A;\D) \aRel (\G;\D)$.
    \item%
      The transition is $(\G;\: \D,!A) \ar{\alpha} (\G';\: \D',!A)$ because of
      $(\G;\D)\ar{\alpha} (\G';\D')$. Then the latter transition can match up
      the former because $(\G';\D',!A)\aRel (\G';\D')$.
    \item%
      The transition is $(\G;\D,!A) \ar{?a} (\G;\D',!A)$ because of $(\G;\D)
      \ar{?a} (\G;\D')$. Then we have $(\G;\D,a)\ar{\tau} (\G;\D')$, which is
      a matching transition because we have $(\G;\D',!A) \aRel (\G;\D')$.
    \item%
      If $(\G;\D,!A)$ can be split as $((\G_1;\D_1),(\G_2;\D_2))$, then $!A$
      occurs in either $\D_1$ or $\D_2$. Without loss of generality, we assume
      that $!A$ occurs in $\D_1$. That is, there is some $\D'_1$ such that
      $\D_1\equiv\D'_1,!A$. Then $(\G;\D)\equiv ((\G_1;\D'_1),(\G_2;\D_2))$.
      It is easy to see that $(\G_1;\D_1) \aRel (\G_1;\D'_1)$ and
      $(\G_2;\D_2)\aRel (\G_2;\D_2)$.
    \end{itemize}

  \item%
    The two states are $(\G,A;\D)$ and $(\G';\D')$ respectively with
    \begin{equation}\label{eq:sii}
      (\G;\D) \si (\G';\D').
    \end{equation} 
    Let us consider any transition from the  first state.
    \begin{itemize}
    \item%
      If $\D\equiv \cdot$, then $(\G;\cdot)\si (\G';\D')$. So there exists
      some $\G''$ such that $(\G';\D')\dar{\tau} (\G'';\cdot)$ and
      $(\G;\cdot)\si (\G'';\cdot)$. It follows that $(\G,A;\cdot) \aRel
      (\G'';\cdot)$ as required.
    \item%
      The transition is $(\G,A;\D)\ar{\tau} (\G,A; \D,A)$. We argue that it is
      matched up by the trivial transition $(\G';\D')\dar{\tau}(\G';\D')$.  By
      (\ref{eq:si1}) and the compositionality of $\si$, we obtain
      \begin{equation}\label{eq:si2}
        (\G; \D,A) \si (\G;\D).
      \end{equation}
      By (\ref{eq:sii}) and (\ref{eq:si2}), together with the transitivity of
      similarity, it can be seen that $(\G; \D,A)\si(\G';\D')$, which implies
      $(\G,A; \D,A) \aRel (\G';\D')$.
    \item%
      The transition is $(\G,A;\D)\ar{\alpha} (\G,A;\D'')$ because of
      $(\G;\D)\ar{\alpha} (\G;\D'')$.  By (\ref{eq:sii}) there exist some
      $\G''',\D'''$ such that $(\G';\D')\dar{\alpha} (\G''';\D''')$ and
      $(\G;\D'')\si (\G''';\D''')$. Therefore, $(\G,A;\D'') \aRel
      (\G''';\D''')$ and we have found the matching transition from
      $(\G';\D')$.
    \item%
      The transition is $(\G,A;\D)\ar{?a} (\G,A;\D'')$ because of
      $(\G;\D)\ar{?a} (\G;\D'')$.  By (\ref{eq:sii}) there exist some
      $\G''',\D'''$ such that $(\G';\D',a)\dar{\alpha} (\G''';\D''')$ and
      $(\G;\D'')\si (\G''';\D''')$. Therefore, $(\G,A;\D'') \aRel
      (\G''';\D''')$ and we have found the matching transition from
      $(\G';\D',a)$.
    \item%
      If $(\G,A;\D)$ can be split as $((\G_1;\D_1), (\G_2;\D_2))$, then $A$
      occurs in either $\G_1$ or $\G_2$. Without loss of generality, we assume
      that $A$ occurs in $\G_1$. That is, there is some $\G'_1$ such that
      $\G_1\equiv\G'_1,A$. Then $(\G;\D)\equiv ((\G'_1;\D_1), (\G_2;\D_2))$.
      By (\ref{eq:sii}) we have the transition $(\G';\D')\dar{\tau}
      ((\G_3;\D_3),(\G_4;\D_4))$ for some $(\G_3;\D_3)$ and $(\G_4;\D_4)$ such
      that $(\G'_1;\D_1)\si (\G_3;\D_3)$ and $(\G_2;\D_2) \si (\G_4;\D_4)$. It
      follows that $(\G_1;\D_1)\aRel (\G_3;\D_3)$ and $(\G_2;\D_2) \aRel
      (\G_4;\D_4)$.
      \qed%
    \end{itemize}
  \end{itemize}
\end{itemize}
\renewcommand{\qedsymbol}{}
\end{proof}
\vspace{-3ex}

\begin{corollary}\label{cor:prov.si2}
  If $\lseq[]{\G;\D}{A}$, then $(\cdot; A) \si (\G;\D)$.
\end{corollary}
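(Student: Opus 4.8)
The plan is to obtain this as an immediate consequence of Theorem~\ref{thm:prov.si2}, trimming the unrestricted context on the left via weakening. First I would feed the hypothesis $\lseq[]{\G;\D}{A}$ into Theorem~\ref{thm:prov.si2}, which directly yields $(\G;A) \si (\G;\D)$. The conclusion we want differs from this only in that the unrestricted context of the left-hand state is empty rather than $\G$, so it remains to bridge $(\cdot;A)$ and $(\G;A)$.

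For that bridge I would invoke Weakening (Lemma~\ref{lem:si.ext}), instantiating it with empty unrestricted context, linear context $A$, and extension $\G' := \G$. This gives $(\cdot;A) \si ((\cdot,\G);A)$, and since $(\cdot,\G) \equiv \G$ by the monoid laws on contexts (recall that we work modulo structural congruence), this reads $(\cdot;A) \si (\G;A)$. Finally, chaining $(\cdot;A) \si (\G;A)$ with $(\G;A) \si (\G;\D)$ through the transitivity of $\si$ (the preorder part of Property~\ref{prot:comp}) produces $(\cdot;A) \si (\G;\D)$, as required.

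I do not expect any real obstacle here: the corollary is essentially Theorem~\ref{thm:prov.si2} composed with one weakening step. The only point needing a moment of care is lining up the conclusion of Lemma~\ref{lem:si.ext} with the premise of the theorem using the congruence $(\cdot,\G) \equiv \G$, but this is immediate given that states are identified modulo the commutative-monoid structure of their contexts.
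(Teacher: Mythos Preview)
Your proposal is correct and matches the paper's own proof, which simply cites Lemma~\ref{lem:si.ext}, Theorem~\ref{thm:prov.si2}, and the transitivity of $\si$. The only minor difference is presentational: you spell out the instantiation of the weakening lemma and the use of $(\cdot,\G)\equiv\G$, whereas the paper leaves these implicit.
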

\begin{proof}
  By Lemma~\ref{lem:si.ext}, Theorem~\ref{thm:prov.si2}, and the transitivity
  of $\si$.
\end{proof}

\begin{proposition}\label{prop:tau2}
  If $(\G_1;\D_1)\dar{\tau} (\G_2;\D_2)$ and $\G_2;\D_2\vdash A$, then
  $\G_1;\D_1 \vdash A$.
\end{proposition}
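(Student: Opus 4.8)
The plan is to reduce the statement to the case of a single reduction step. By the Property relating $\dar{\tau}$ and $\reduce^*$, the hypothesis $(\G_1;\D_1) \dar{\tau} (\G_2;\D_2)$ is equivalent to $(\G_1;\D_1) \reduce^* (\G_2;\D_2)$, so I would first switch to reasoning about $\reduce^*$ and then proceed by induction on the number of reduction steps in this sequence. The base case of zero steps is immediate, since then $(\G_1;\D_1) \equiv (\G_2;\D_2)$ and provability is invariant under structural congruence. For the inductive step it suffices to establish the single-step claim: if $(\G_1;\D_1) \reduce (\G_1';\D_1')$ and $\G_1';\D_1' \vdash A$, then $\G_1;\D_1 \vdash A$; the full statement then follows by composing this with the induction hypothesis.

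The heart of the argument is a case analysis on which reduction rule of Figure~\ref{fig:mall2} is used, and the key observation is that every reduction rule is the mirror image of a left rule of the sequent calculus in Figure~\ref{fig:mall}, read from premise to conclusion with the succedent $A$ playing the role of the formula $C$ in those rules. Concretely: the reduction $(\G;\D,B\tensor B') \reduce (\G;\D,B,B')$ is undone by $\tensor L$; the reduction $(\G;\D,\unit) \reduce (\G;\D)$ by $\unit L$; the reduction $(\G;\D,B_1 \with B_2) \reduce (\G;\D,B_i)$ by $\with L_i$; the reduction $(\G;\D,\bang B) \reduce (\G,B;\D)$ by $\bang L$; and the reduction $(\G,B;\D) \reduce (\G,B;\D,B)$ by \textit{clone}. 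In each of these five cases, the target state is exactly the premise of the corresponding left rule and the source state is its conclusion, so applying that rule to the assumed derivation of $\G_1';\D_1' \vdash A$ immediately yields the desired derivation of $\G_1;\D_1 \vdash A$.

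The only case requiring more than a single rule is the implication reduction $(\G;\D,a,a\lolli B) \reduce (\G;\D,B)$. Here I would apply $\lolli L$ with the linear context split as $\D_1 := a$ and $\D_2 := \D$, which produces two premises, $\G;a \vdash a$ and $\G;\D,B \vdash A$. The second premise is exactly the assumed derivation, and the first is discharged by \textit{init}; this is precisely where the restriction that the antecedent of $\lolli$ be atomic is used, since \textit{init} applies only to atomic formulas. Combining the two premises via $\lolli L$ gives $\G;\D,a,a\lolli B \vdash A$, completing the case.

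I do not anticipate a serious obstacle, since the proof amounts to the observation that the reduction relation is, step for step, the left rules of the calculus read from premises to conclusion. The only points demanding care are bookkeeping ones: ensuring that provability is stable under the structural congruence $\equiv$ (so that the induction is well-defined and the matching of redexes up to associativity, commutativity, and the set-like treatment of $\G$ via $(\G,B,B)\equiv(\G,B)$ is legitimate), and correctly instantiating the context split in the $\lolli L$ case. Neither of these presents a genuine difficulty.
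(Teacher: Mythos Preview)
Your proposal is correct and is essentially a fully worked-out version of what the paper records only as ``by rule induction.'' The one cosmetic difference is that you first pass through $\reduce^*$ via Property~3 and then case-analyse the reduction rules of Figure~\ref{fig:mall2}, whereas the paper's phrasing suggests inducting directly on the $\ar{\tau}$ rules of Figure~\ref{fig:lts-!}; since these two rule sets are in one-to-one correspondence (with \rname{lts!?} matching the $\lolli$ reduction), the arguments are identical in substance.
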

\begin{proof}
By rule induction.
\end{proof}

Our next goal is to prove the coincidence of logical preorder with simulation.
For that purpose, a series of intermediate results are needed.

\begin{theorem}\label{thm:recomplete2}
  If $(\G_1; A) \si (\G_2;\D)$, then $\G_2;\D \vdash A$.
\end{theorem}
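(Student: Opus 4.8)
The plan is to proceed by induction on the structure of the formula $A$, in each case reading off the appropriate \emph{right} rule and using the clauses of Definition~\ref{def:sim-!} to produce the reductions of $(\G_2;\D)$ that justify its premises. The workhorse throughout is Proposition~\ref{prop:tau2}: whenever the simulation forces $(\G_2;\D)\dar{\tau}(\G_2';\D_2')$ and we have already derived $\lseq{\G_2';\D_2'}{A}$, we conclude $\lseq{\G_2;\D}{A}$. So the task in each case is to drive $(\G_2;\D)$ via $\dar{\tau}$ (and, where needed, a send or receive) to a state whose derivability follows from the induction hypothesis.

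The connective cases are largely uniform. For $A=\top$ there is nothing to do, since $\top R$ gives $\lseq{\G_2;\D}{\top}$ outright. For $A=A_1\with A_2$, each transition $(\G_1;A_1\with A_2)\ar{\tau}(\G_1;A_i)$ is matched by clause~3 with some $(\G_2;\D)\dar{\tau}(\G_2^i;\D_2^i)$ and $(\G_1;A_i)\si(\G_2^i;\D_2^i)$; the induction hypothesis yields $\lseq{\G_2^i;\D_2^i}{A_i}$, Proposition~\ref{prop:tau2} lifts this to $\lseq{\G_2;\D}{A_i}$, and $\with R$ concludes. For $A=A_1\tensor A_2$, I first apply clause~3 to $(\G_1;A_1\tensor A_2)\ar{\tau}(\G_1;A_1,A_2)$, then split the reduced left state as the partition $((\G_1;A_1),(\G_1;A_2))$ and invoke clause~2 to get $(\G_2;\D)\dar{\tau}((\G_2';\D_2'),(\G_2'';\D_2''))$ with both halves simulated; two appeals to the induction hypothesis, weakening of the unrestricted zone (admissible in DILL, as are identity and cut), and $\tensor R$ establish derivability of the reduced state, whence Proposition~\ref{prop:tau2} applies. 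The implication case $A=a\lolli B$ is where clause~4 does its work: from $(\G_1;a\lolli B)\ar{?a}(\G_1;B)$ it supplies $(\G_2;\:\D,a)\dar{\tau}(\G_2';\D_2')$ with $(\G_1;B)\si(\G_2';\D_2')$, so the induction hypothesis gives $\lseq{\G_2';\D_2'}{B}$, Proposition~\ref{prop:tau2} gives $\lseq{\G_2;\:\D,a}{B}$, and $\lolli R$ yields $\lseq{\G_2;\D}{a\lolli B}$. For $A=\bang B$, the transition $(\G_1;\bang B)\ar{\tau}(\G_1,B;\cdot)$ together with clauses~3 and~1 forces $(\G_2;\D)\dar{\tau}(\G_2'';\cdot)$ while retaining $(\G_1,B;\cdot)\si(\G_2'';\cdot)$; a clone step $(\G_1,B;\cdot)\ar{\tau}(\G_1,B;B)$ then exposes a copy of $B$ in the linear zone, and applying the induction hypothesis to it (via clause~3 once more) produces $\lseq{\G_2''';\D_2'''}{B}$ for a suitable $\dar{\tau}$-successor, hence $\lseq{\G_2'';\cdot}{B}$ by Proposition~\ref{prop:tau2}, and finally $\lseq{\G_2;\D}{\bang B}$ by $!R$ and a last use of Proposition~\ref{prop:tau2}.

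The base cases $A=\unit$ and $A=a$ carry the only real subtlety, and I expect the atomic case to be the main obstacle. Both rest on the observation that a simulated state reaching an empty linear context is provably $\unit$: clause~1 forces $(\G_2;\D)\dar{\tau}(\G_2'';\cdot)$, and since $\lseq{\G_2'';\cdot}{\unit}$ by $\unit R$, Proposition~\ref{prop:tau2} yields $\lseq{\G_2;\D}{\unit}$, settling $A=\unit$. For $A=a$ the difficulty is that clause~3 on $(\G_1;a)\ar{!a}(\G_1;\cdot)$ only tells us $(\G_2;\D)\dar{\tau}(\hat\G;\:\check\D,a)\ar{!a}(\hat\G;\check\D)\dar{\tau}(\G_2';\D_2')$ together with the fact that the residue reduces to an empty linear context; the rigid \emph{init} rule cannot by itself discharge the leftover context $\check\D$. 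The fix is to note that $(\hat\G;\check\D)\dar{\tau}(\G_2'';\cdot)$ gives $\lseq{\hat\G;\check\D}{\unit}$ as above, and then to reattach the sent atom: replaying that same $\tau$-sequence with $a$ as an inert spectator yields $(\hat\G;\:\check\D,a)\dar{\tau}(\G_2'';a)$, so $\lseq{\hat\G;\:\check\D,a}{a}$ follows from \emph{init} by Proposition~\ref{prop:tau2} (equivalently, one cuts $\lseq{\hat\G;\check\D}{\unit}$ against $\lseq{\hat\G;\:a,\unit}{a}$). A final application of Proposition~\ref{prop:tau2} along $(\G_2;\D)\dar{\tau}(\hat\G;\:\check\D,a)$ delivers $\lseq{\G_2;\D}{a}$. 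The crux throughout is that the operational, send/receive-level information furnished by $\si$ must be converted into static derivability, and the atomic case is where this conversion is least direct, forcing the detour through $\unit$ and admissible cut.
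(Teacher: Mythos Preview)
Your proof is correct and follows essentially the same approach as the paper: structural induction on $A$, using the simulation clauses to drive $(\G_2;\D)$ via $\dar{\tau}$ to a state whose derivability follows from the induction hypothesis, and then invoking Proposition~\ref{prop:tau2}. The paper details only the case $A=\bang A'$, where it handles the clone step via Lemma~\ref{lem:tau.si} and transitivity of $\si$ rather than a further appeal to clause~3 as you do, but the two routes are interchangeable.
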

\begin{proof}
  By induction on the structure of $A$. As an example, we consider one case.
  \begin{itemize}
  \item%
    $A\equiv {!}A'$.  By rule lts!A we have the transition $(\G_1; A)\ar{\tau}
    (\G_1,A';\cdot)$. Since $(\G_1,A) \si (\G_2;\D)$ there is some $\G'_2$
    such that $(\G_2;\D)\dar{\tau} (\G'_2;\cdot)$ and
    \begin{equation}\label{eq:tran2}
      (\G_1,A';\cdot) \si (\G'_2;\cdot). 
    \end{equation}
    From $(\G_1,A';\cdot)$ we have the transition $(\G_1,A';\cdot)\ar{\tau}
    (\G_1,A';A')$ by rule ltsClone. By Lemma~\ref{lem:tau.si} we have
    \begin{equation}\label{eq:tran1}
      (\G_1,A'; A') \si (\G_1,A';\cdot)
    \end{equation}
    It follows from (\ref{eq:tran2}), (\ref{eq:tran1}), and the transitivity
    of similarity that
    \begin{equation}\label{eq:tran3}
      (\G_1,A'; A') \si (\G'_2;\cdot). 
    \end{equation}
    Now by induction hypothesis, we obtain $\lseq[]{\G'_2;\cdot}{A'}$ because
    $A'$ has a smaller structure than $A$. By rule !R we infer that
    $\lseq[]{\G'_2;\cdot}{A}$. Using Proposition~\ref{prop:tau2} we conclude
    that $\lseq[]{\G_2;\D}{A}$.  \qed
\end{itemize}
\renewcommand{\qedsymbol}{}
\end{proof}

\begin{theorem}\label{thm:ls2}
$(\G;\D_1) \lleq (\G;\D_2)$ if and only if $(\G;\D_1) \si (\G;\D_2)$.
\end{theorem}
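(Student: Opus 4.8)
The plan is to route both implications through the single formula $A \equiv \bigotimes!\G \tensor \bigotimes\D_1$ supplied by the inductive characterization of $\lleq$ --- the Property stating that $(\G_1;\D_1)\lleq(\G_2;\D_2)$ iff $\lseq{\G_2;\D_2}{\bigotimes!\G_1\tensor\bigotimes\D_1}$ --- here specialised to a common unrestricted context $\G$. The pivot of the argument is that $(\G;A)$ and $(\G;\D_1)$ are interchangeable up to $\si$. First I would record the silent computation $(\G;A)\dar{\tau}(\G;\D_1)$: the outermost $\tensor$ is split by $lts\tensor$, the two halves $\bigotimes!\G$ and $\bigotimes\D_1$ are decomposed likewise, every residual $\unit$ is erased by $lts\unit$, and each $!B$ is absorbed into $\G$ by $lts!A$ --- which changes nothing, since $B$ already occurs in $\G$ and the two copies are identified by the structural equality on states. (The degenerate cases $\G=\cdot$ or $\D_1=\cdot$ are covered by reading the corresponding conjunction as $\unit$.) By Lemma~\ref{lem:tau.si} this immediately gives $(\G;\D_1)\si(\G;A)$.

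For the forward direction, assume $(\G;\D_1)\lleq(\G;\D_2)$. The characterization yields $\lseq{\G;\D_2}{A}$, and Theorem~\ref{thm:prov.si2} turns this into $(\G;A)\si(\G;\D_2)$. Composing with $(\G;\D_1)\si(\G;A)$ from the previous paragraph, via the transitivity of $\si$ (Property~\ref{prot:comp}), delivers $(\G;\D_1)\si(\G;\D_2)$.

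For the converse I also need the opposite comparison $(\G;A)\si(\G;\D_1)$, which Lemma~\ref{lem:tau.si} does not provide. Rather than build a simulation by hand, I would derive it logically. By the characterization (and since $A$ is a single formula, so $\bigotimes$ of the singleton context is $A$), $(\G;A)\lleq(\G;\D_1)$ amounts to the derivability of $\lseq{\G;\D_1}{\bigotimes!\G\tensor A}$, which follows by repeated $\tensor R$: each $!B$ with $B\in\G$ is provable from $\G;\cdot$ (using $!R$, \emph{clone}, and the admissible \emph{identity} rule), and $\bigotimes\D_1$ is provable from $\G;\D_1$, with $\G$ freely shared across branches. Feeding $(\G;A)\lleq(\G;\D_1)$ into the forward direction established above yields $(\G;A)\si(\G;\D_1)$. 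Now, assuming $(\G;\D_1)\si(\G;\D_2)$, transitivity gives $(\G;A)\si(\G;\D_2)$; Theorem~\ref{thm:recomplete2} converts this into $\lseq{\G;\D_2}{A}$; and a final appeal to the characterization gives $(\G;\D_1)\lleq(\G;\D_2)$.

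The main obstacle is precisely this ``reduct-simulates-redex'' comparison $(\G;A)\si(\G;\D_1)$: Lemma~\ref{lem:tau.si} only supplies the easy direction, and a direct coinductive proof would have to relate $(\G;A)$ to every partial unfolding of $A$, mirroring the delicate simulation constructed in the $!R$ case of Theorem~\ref{thm:prov.si2}. The device above sidesteps this by discharging it as a routine derivability check and reusing the forward implication, so the only real care needed is to confirm there is no circularity --- the forward direction rests solely on Theorem~\ref{thm:prov.si2}, Lemma~\ref{lem:tau.si}, and the computation $(\G;A)\dar{\tau}(\G;\D_1)$, none of which invokes the converse.
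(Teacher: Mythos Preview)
Your argument is correct, but it differs from the paper's in two linked ways. First, the paper pivots on the simpler formula $\bigotimes\D_1$ rather than your $A=\bigotimes!\G\tensor\bigotimes\D_1$: for the forward direction it simply evaluates the \emph{definition} of $\lleq$ at $(\G',\D',C)=(\cdot,\cdot,\bigotimes\D_1)$, noting that $\lseq{\G;\D_1}{\bigotimes\D_1}$ is trivial, and then proceeds exactly as you do via Theorem~\ref{thm:prov.si2}, the computation $(\G;\bigotimes\D_1)\dar{\tau}(\G;\D_1)$, and Lemma~\ref{lem:tau.si}. Second, and more substantively, the paper's converse never needs the auxiliary comparison $(\G;A)\si(\G;\D_1)$ that you obtain by bootstrapping the forward implication: instead it works directly from the definition of $\lleq$, assumes an arbitrary $\lseq{\G',\G;\D',\D_1}{C}$, uses Theorem~\ref{thm:prov.si2} and the \emph{compositionality} of $\si$ (Property~\ref{prot:comp}) to reach $(\G',\G;C)\si(\G',\G;\D',\D_2)$, and concludes by Theorem~\ref{thm:recomplete2}. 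Your route trades compositionality for the inductive characterization of $\lleq$ plus a small derivability check; it is self-contained in the sense that it does not need Property~\ref{prot:comp}, at the cost of the extra bootstrap and a slightly more delicate $\dar{\tau}$ computation (the absorption of each $!B$ back into $\G$ via idempotence). The paper's route is shorter and keeps the two directions independent.
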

\begin{proof}
  \begin{enumerate}
  \item[($\Rightarrow$)] %
    Suppose $(\G;\D_1)\lleq (\G;\D_2)$. It is trivial to see that
    $\lseq[]{\G;\D_1}{\bigotimes\D_1}$. By the definition of logical preorder, it
    follows that $\lseq[]{\G;\D_2}{\bigotimes\D_1}$. By
    Theorem~\ref{thm:prov.si2}
    we have
    \begin{equation}\label{eq:og1}
      (\G;\bigotimes\D_1) ~\si~ (\G;\D_2).
    \end{equation}
    According to our reduction semantics, we have $(\G;\bigotimes\D_1)
    \dar{\tau} (\G;\D_1)$. By Lemma~\ref{lem:tau.si}, it follows that
    \begin{equation}\label{eq:og2}
      (\G;\D_1) ~\si~ (\G;\bigotimes\D_1).
    \end{equation}
    By combining (\ref{eq:og1}) and (\ref{eq:og2}), we obtain that
    $(\G;\D_1)\si (\G;\D_2)$ because $\si$ is transitive. 

  \item[($\Leftarrow$)] %
    Suppose that $(\G;\D_1)\si (\G;\D_2)$.  For any $\G';\ \D$ and $A$, assume that
    $\lseq[]{(\G',\G;\D,\D_1)}{A}$. By Theorem~\ref{thm:prov.si2} we have
    \begin{equation}\label{eq:og3}
      (\G',\G; A)\si (\G',\G;\D,\D_1). 
    \end{equation}
    Since $(\G;\D_1)\si (\G;\D_2)$ and $\si$ is compositional, we obtain
    \begin{equation}\label{eq:og4}
      (\G',\G; \D,\D_1)~\si~ (\G',\G; \D,\D_2).
    \end{equation}
    By (\ref{eq:og3}), (\ref{eq:og4}) and the transitivity of $\si$, we see that
    $(\G',\G; A) \si (\G',\G;\D,\D_2)$. Then Theorem~\ref{thm:recomplete2} yields
    $\lseq[]{\G',\G;\D,\D_2}{A}$. Therefore, we have shown that
    $(\G;\D_1)\lleq (\G;\D_2)$.
  \qed
  \end{enumerate}
\renewcommand{\qedsymbol}{}
\end{proof}
\vspace{-3ex}

In Theorem~\ref{thm:ls2} we compare two states with exactly the same
unrestricted resource $\G$. The theorem can be relaxed so
that the
two states can have different unrestricted resources. In order to prove
that result, we need two more lemmas. 

\begin{lemma}\label{lem:exp.l}
  $(\G;\D)\lleq (\cdot;\: !\G,\D)$ and $(\cdot;\: !\G,\D) \lleq (\G;\D)$.
\end{lemma}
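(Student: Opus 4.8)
The plan is to derive both inclusions from the inductive characterization of $\lleq$ stated earlier, namely that $(\G_1;\D_1)\lleq(\G_2;\D_2)$ holds exactly when $\lseq{\G_2;\D_2}{\bigotimes!\G_1\tensor\bigotimes\D_1}$ is derivable. This turns each half of the lemma into a concrete provability goal that can be discharged by the right rules of Figure~\ref{fig:mall}.

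For $(\G;\D)\lleq(\cdot;\,!\G,\D)$, the characterization asks us to derive $\lseq{\cdot;\,!\G,\D}{\bigotimes!\G\tensor\bigotimes\D}$. Here the succedent is the tensor of exactly the formulas appearing in the linear context $!\G,\D$, so I would peel it apart by repeated applications of ${\tensor}R$, routing each conjunct to the singleton linear context holding the matching formula and closing every leaf by the (admissible) identity rule, with the empty-context edge cases handled by ${\unit}R$ and ${\unit}L$. This direction is essentially bookkeeping and involves no real choice.

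For $(\cdot;\,!\G,\D)\lleq(\G;\D)$, the characterization asks us to derive $\lseq{\G;\D}{\unit\tensor\bigotimes(!\G,\D)}$, where the empty unrestricted context contributes the $\unit$ factor. I would again decompose the succedent with ${\tensor}R$, but now the split of the linear context is the crux: every ${!}A$ factor coming from $!\G$ must be produced by ${!}R$, which demands an \emph{empty} linear context, so all of $\D$ has to be routed to the $\bigotimes\D$ factors. This is exactly where the hypothesis $A\in\G$ is used --- each $\lseq{\G;\cdot}{{!}A}$ is derivable by cloning $A$ out of the unrestricted context, closing with identity, and applying ${!}R$ --- while the remaining factors are closed by identity against the corresponding pieces of $\D$.

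The main obstacle is this second direction, and specifically the bang discipline just described: because ${!}R$ forbids linear hypotheses, the tensor decomposition is forced, and one must check that assigning all linear resources to the $\D$-factors while giving the ${!}A$-factors nothing is both possible and sufficient. An equivalent, more syntactic route avoids the characterization altogether: direction one follows by applying ${!}L$ once per formula of $\G$ to the assumed derivation of $\lseq{\G',\G;\,\D',\D}{C}$, and direction two follows by repeatedly cutting the derivable sequents $\lseq{\G',\G;\cdot}{{!}A}$ against $\lseq{\G';\,\D',!\G,\D}{C}$ to absorb each ${!}A$ back into the unrestricted context, relying on admissibility of cut (guaranteed by harmony).
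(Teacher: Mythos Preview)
Your proposal is correct, but it takes a genuinely different route from the paper's own proof. The paper argues the first inclusion $(\G;\D)\lleq(\cdot;\,!\G,\D)$ operationally: it observes the transition $(\G';\D',!\G,\D)\dar{\tau}(\G',\G;\D',\D)$ via repeated $lts!A$, then invokes Proposition~\ref{prop:tau2} to transport provability backwards along $\dar{\tau}$. For the second inclusion the paper does a direct rule induction on the derivation of $\lseq{\G';\D',!\G,\D}{A}$. By contrast, your main argument stays entirely inside the proof theory by reducing both directions to the characterization $(\G_1;\D_1)\lleq(\G_2;\D_2)$ iff $\lseq{\G_2;\D_2}{\bigotimes!\G_1\tensor\bigotimes\D_1}$; the key observation you make---that the ${!}R$ side-condition forces all of $\D$ to the $\bigotimes\D$ factors while each $!A$ with $A\in\G$ is discharged from the unrestricted context via \emph{clone} and $!R$---is exactly the right one, and the edge cases with $\unit$ are handled as you indicate. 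Your alternative ``syntactic'' route is in fact closer in spirit to the paper: pushing each $A\in\G$ into the linear zone by $!L$ is the proof-theoretic shadow of the $lts!A$ step the paper uses, while your use of admissible cut for the reverse direction replaces the paper's rule induction (note that the cuts accumulate copies of $\G'$ and $\G$ in the unrestricted zone, which is harmless by the idempotence $(\G,A,A;\D)\equiv(\G,A;\D)$). Either of your approaches is fine; the characterization-based one has the virtue of not touching the LTS at all, which keeps this lemma purely on the logical side.
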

\begin{proof}
  For any $\G',\D'$, it follows from rule lts!A that $(\G';\D',!\G,\D)
  \dar{\tau} (\G',\G; \D',\D)$. By Proposition~\ref{prop:tau2}, if
  $\G',\G;\D',\D \vdash A$ then $\G';\D',!\G,\D \vdash A$, for any formula
  $A$. In other words, $(\G;\D)\lleq (\cdot;\: !\G,\D)$.

  Suppose $\G';\D',!\G,\D \vdash A$ for any $\G', \D'$ and $A$. By rule
  induction on the derivation of $\G';\D',!\G,\D \vdash A$ it can be shown
  that $\G',\G;\D',\D \vdash A$, thus $(\cdot;\: !\G,\D) \lleq (\G;\D)$.
\end{proof}

\begin{lemma}\label{lem:exp.s}
  $(\G;\D)\si (\cdot; !\G,\D)$ and $(\cdot; !\G,\D) \si (\G;\D)$.
\end{lemma}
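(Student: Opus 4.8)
The plan is to establish the two simulations separately, and in both cases to reduce the goal to results already proved rather than to build a simulation relation by hand. The easy inclusion is $(\G;\D)\si(\cdot;\:!\G,\D)$. Here I would simply observe that the right-hand state silently reduces to the left-hand one: applying rule \rname{lts!A} once for each formula of $\G$ moves each banged assumption out of the linear context and into the unrestricted context, so $(\cdot;\:!\G,\D)\dar{\tau}(\G;\D)$. Lemma~\ref{lem:tau.si} then yields $(\G;\D)\si(\cdot;\:!\G,\D)$ at once.

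For the converse inclusion $(\cdot;\:!\G,\D)\si(\G;\D)$ the two states carry different unrestricted contexts, so Theorem~\ref{thm:ls2} does not apply directly. The naive route would be to exhibit a simulation by coinduction, and I expect the main obstacle there to be the partition-preserving clause of Definition~\ref{def:sim-!}: a split of $(\cdot;\:!\G,\D)$ distributes the banged formulas $!\G$ across the two components, whereas on the right the corresponding plain formulas live in the set-like unrestricted context $\G$ and must be shared between both components. I would sidestep this entirely by routing through a single-consequent state. Let $A$ denote the formula $\unit\tensor\bigotimes(!\G,\D)$ supplied by the inductive characterization of $\lleq$. From Lemma~\ref{lem:exp.l} we have $(\cdot;\:!\G,\D)\lleq(\G;\D)$, and by that characterization (with $\G_1=\cdot$, so $\bigotimes!\G_1=\unit$) this is precisely $\lseq{\G;\D}{A}$. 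Corollary~\ref{cor:prov.si2} then delivers $(\cdot;A)\si(\G;\D)$.

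It remains to connect $(\cdot;\:!\G,\D)$ with $(\cdot;A)$. Decomposing the tensors of $A$ by rule \rname{lts\tensor} and discarding the $\unit$ factor by rule \rname{lts\unit} produces the silent run $(\cdot;A)\dar{\tau}(\cdot;\:!\G,\D)$, so Lemma~\ref{lem:tau.si} gives $(\cdot;\:!\G,\D)\si(\cdot;A)$. Composing this with $(\cdot;A)\si(\G;\D)$ through transitivity of $\si$ (Property~\ref{prot:comp}) closes the argument.

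The crux of the whole proof is the observation that, although $\si$ between states sharing an unrestricted context is governed by Theorem~\ref{thm:ls2}, the context mismatch here can be absorbed into the consequent $A$ and handled by Corollary~\ref{cor:prov.si2}, which already bridges states with different unrestricted contexts. This is exactly what turns the awkward coinductive partition bookkeeping into two applications of Lemma~\ref{lem:tau.si} plus one appeal to provability.
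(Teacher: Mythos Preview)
Your argument is correct. The first inclusion matches the paper exactly: $(\cdot;!\G,\D)\dar{\tau}(\G;\D)$ via repeated \rname{lts!A}, and Lemma~\ref{lem:tau.si} closes it.

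For the second inclusion the paper takes the direct route you chose to avoid: it exhibits the relation
\[
\aRel \;=\; \bigl\{\,\bigl((\G;\,!A_1,\dots,!A_n,\D),\ (\G,A_1,\dots,A_n;\D)\bigr)\,\bigr\}
\]
and verifies the simulation clauses by hand, including the partition clause you flagged as awkward (a split of the left state assigns each $!A_i$ to one half, and the matching split of the right state simply keeps all of $A_1,\dots,A_n$ in the unrestricted part of both halves, which is permitted since $\G$ is set-like). Your route instead composes Lemma~\ref{lem:exp.l} with Property~1 to obtain $\lseq{\G;\D}{\unit\tensor\bigotimes(!\G,\D)}$, then applies Corollary~\ref{cor:prov.si2} to get $(\cdot;A)\si(\G;\D)$, and finishes with $(\cdot;!\G,\D)\si(\cdot;A)$ from Lemma~\ref{lem:tau.si} and transitivity. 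This is a genuinely different decomposition: it trades the explicit coinductive bookkeeping for an appeal to the derivability--simulation bridge already packaged in Corollary~\ref{cor:prov.si2}. The payoff is that you never touch the partition clause; the cost is an extra dependency on Property~1 (the inductive characterization of $\lleq$, proved only in the technical report) and on Lemma~\ref{lem:exp.l}. The paper's proof is more self-contained but longer; yours is shorter and reuses more of the surrounding machinery.
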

\begin{proof}
  Since $(\cdot;\: !\G,\D) \dar{\tau} (\G;\D)$, we apply
  Lemma~\ref{lem:tau.si} and conclude that $(\G,\D)\si (\cdot;\: !\G,\D)$.

  To show that $(\cdot; !\G,\D) \si (\G;\D)$, we let $\aRel$ be the relation
  that relates any state $(\G; !A_1,...,!A_n,\D)$ with the state
  $(\G,A_1,...,A_n;\D )$. The relation $\aRel$ is a simulation. Consider any
  transition from $(\G;!A_1,...,!A_n,\D)$.
  \begin{itemize}
  \item%
    If $(\G; !A_1,...,!A_n,\D) \ar{\alpha} (\G'; !A_1,...,!A_n,\D')$ because
    of $(\G;\D)\ar{\alpha}(\G';\D')$, the transition can be matched up by
    $(\G,A_1,...,A_n;\D ) \ar{\alpha} (\G',A_1,...,A_n;\D')$.
  \item%
    If $(\G;!A_1,...,!A_n;\D) \ar{\tau} (\G,A_1; !A_2,...,!A_n,\D)$, then the
    transition can be matched up by the trivial transition $(\G,A_1,...,A_n;\D
    )\dar{\tau}(\G,A_1,...,A_n;\D )$.
  \item%
    If $(\G;!A_1,...,!A_n,\D)$ performs an input action, it must be given by
    an input action from $\D$. Obviously, this can be mimicked by
    $(\G,A_1,...,A_n;\D )$.
  \item%
    It is easy to see that for any splitting of $(\G;!A_1,...,!A_n,\D)$ there
    is a corresponding splitting of $(\G,A_1,...,A_n;\D )$.
  \end{itemize}
  We have shown that $\aRel$ is a simulation. Therefore,
  $(\G;!A_1,...,!A_n,\D) \si (\G,A_1,...,A_n;\D )$, and as a special case
  $(\cdot;!\G,\D) \si (\G;\D)$.
\end{proof}

We can now establish the correspondence between the logical and simulation preorder.

\begin{theorem}\label{thm:ls3}
  $(\G_1;\D_1) \lleq (\G_2;\D_2)$ if and only if $(\G_1;\D_1) \si (\G_2;\D_2)$.
\end{theorem}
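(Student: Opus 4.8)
The plan is to reduce the general statement to the special case already settled in Theorem~\ref{thm:ls2}, where the two states share the same unrestricted context. The reduction is powered by Lemmas~\ref{lem:exp.l} and~\ref{lem:exp.s}, which together show that internalizing the unrestricted context $\G$ as banged linear assumptions $!\G$ leaves both the logical preorder and the simulation preorder invariant, in both directions. The crucial observation is that the normalized states $(\cdot;\:!\G_1,\D_1)$ and $(\cdot;\:!\G_2,\D_2)$ both carry the \emph{empty} unrestricted context, so they fall squarely within the scope of Theorem~\ref{thm:ls2} instantiated at $\G=\cdot$.

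For the forward direction, I would assume $(\G_1;\D_1)\lleq(\G_2;\D_2)$ and chain the logical preorder through the normalized states. By Lemma~\ref{lem:exp.l} we have $(\cdot;\:!\G_1,\D_1)\lleq(\G_1;\D_1)$ and $(\G_2;\D_2)\lleq(\cdot;\:!\G_2,\D_2)$, so transitivity of $\lleq$ yields $(\cdot;\:!\G_1,\D_1)\lleq(\cdot;\:!\G_2,\D_2)$. Theorem~\ref{thm:ls2} then converts this into $(\cdot;\:!\G_1,\D_1)\si(\cdot;\:!\G_2,\D_2)$, and Lemma~\ref{lem:exp.s} together with transitivity of $\si$ (Property~\ref{prot:comp}) propagates it back out along $(\G_1;\D_1)\si(\cdot;\:!\G_1,\D_1)\si(\cdot;\:!\G_2,\D_2)\si(\G_2;\D_2)$, giving the desired $(\G_1;\D_1)\si(\G_2;\D_2)$.

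The backward direction is entirely symmetric: starting from $(\G_1;\D_1)\si(\G_2;\D_2)$, I would use the simulation directions of Lemma~\ref{lem:exp.s} to pass to $(\cdot;\:!\G_1,\D_1)\si(\cdot;\:!\G_2,\D_2)$, apply Theorem~\ref{thm:ls2} in the reverse direction to obtain the logical preorder between the normalized states, and finally use Lemma~\ref{lem:exp.l} and transitivity of $\lleq$ to return to $(\G_1;\D_1)\lleq(\G_2;\D_2)$. Because both normalization lemmas supply the required inequalities in both directions and both $\lleq$ and $\si$ are transitive, no genuinely new argument is needed.

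I do not expect a real obstacle here: all the difficulty has been front-loaded into Theorem~\ref{thm:ls2} and the two normalization lemmas. The only point to check is that the reduction indeed lands in the common-context regime, i.e.\ that after internalization both states carry the empty unrestricted context, so that Theorem~\ref{thm:ls2} applies; this is immediate from the shape $(\cdot;\:!\G_i,\D_i)$.
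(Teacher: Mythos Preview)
Your proposal is correct and mirrors the paper's proof essentially step for step: normalize both states to $(\cdot;!\G_i,\D_i)$ via Lemmas~\ref{lem:exp.l} and~\ref{lem:exp.s}, apply Theorem~\ref{thm:ls2} at the common (empty) unrestricted context, and use transitivity of $\lleq$ and $\si$ to transport the relation back; the converse is handled symmetrically in both accounts.
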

\begin{proof}
  Suppose $(\G_1;\D_1) \lleq (\G_2;\D_2)$. By Lemma~\ref{lem:exp.l} we infer
  that
  \[(\cdot;!\G_1,\D_1)\lleq (\G_1;\D_1) \lleq (\G_2;\D_2) \lleq
  (\cdot;!\G_2,\D_2).\]
  Since $\lleq$ is a preorder, its transitivity gives
$(\cdot;!\G_1,\D_1) \lleq (\cdot;!\G_2,\D_2).$
By Theorem~\ref{thm:ls2}, we have $(\cdot;!\G_1,\D_1) \si
(\cdot;!\G_2,\D_2)$. Then by Lemma~\ref{lem:exp.s} we
infer that
\[(\G_1;\D_1) \si (\cdot; !\G_1,\D_1) \si (\cdot; !\G_2,\D_2) \si
(\G_2;\D_2).\]
By the transitivity of $\si$, we obtain that $(\G_1;\D_1) \si
(\G_2;\D_2)$.

In a similar manner, we can show that $(\G_1;\D_1) \si (\G_2;\D_2)$
implies $(\G_1;\D_1) \lleq (\G_2;\D_2)$.
\end{proof}

Chaining Theorems~\ref{thm:cxt,ll} and \ref{thm:ls3} yields the main result of
the paper, i.e., the equivalence of the logical and contextual preorder.
\begin{corollary}
\label{cor:equiv}
$(\G_1;\D_1) \lleq (\G_2;\D_2)$ if and only if $(\G_1;\D_1) \cxtleq
(\G_2;\D_2)$.
\qed
\end{corollary}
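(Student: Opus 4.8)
The plan is to obtain the corollary as an immediate consequence of the two biconditionals already established, using the simulation preorder $\si$ as a common pivot. The key observation is that both characterizations on the two sides of the desired equivalence have been independently reduced to $\si$: Theorem~\ref{thm:cxt,ll} shows that $\cxtleq$ coincides with $\si$, and Theorem~\ref{thm:ls3} shows that $\lleq$ coincides with $\si$. Since these are biconditionals, they can simply be composed.

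Concretely, I would argue as follows. Fix arbitrary states $(\G_1;\D_1)$ and $(\G_2;\D_2)$. By Theorem~\ref{thm:ls3}, the statement $(\G_1;\D_1) \lleq (\G_2;\D_2)$ holds if and only if $(\G_1;\D_1) \si (\G_2;\D_2)$. By Theorem~\ref{thm:cxt,ll}, the statement $(\G_1;\D_1) \si (\G_2;\D_2)$ holds if and only if $(\G_1;\D_1) \cxtleq (\G_2;\D_2)$. Transitivity of logical equivalence between these three conditions then yields $(\G_1;\D_1) \lleq (\G_2;\D_2)$ if and only if $(\G_1;\D_1) \cxtleq (\G_2;\D_2)$, as required.

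There is no genuine obstacle at this final step: the substantive work has already been discharged in the preceding development. The delicate arguments --- the coinductive soundness and completeness of $\cxtleq$ with respect to $\si$ (where compositionality of $\si$, \emph{cf.}\ Property~\ref{prot:comp}, is essential), and the mixed inductive/coinductive bridge between $\lleq$ and $\si$ built from Theorems~\ref{thm:prov.si2} and~\ref{thm:recomplete2} together with the expansion Lemmas~\ref{lem:exp.l} and~\ref{lem:exp.s} --- are all packaged into Theorems~\ref{thm:cxt,ll} and~\ref{thm:ls3}. The only thing the corollary needs is the recognition that $\si$ is literally the same intermediary relation in both theorems, so that the two biconditionals may be chained without any further side conditions. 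Thus the proof is a one-line composition of the two preceding equivalences.
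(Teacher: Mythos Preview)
Your proposal is correct and matches the paper's own argument exactly: the corollary is obtained by chaining Theorems~\ref{thm:cxt,ll} and~\ref{thm:ls3} through the common intermediary $\si$, with no additional work required.
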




\section{Conclusions and Future Work}
\label{sec:concl}

Corollary~\ref{cor:equiv} shows that the proof-theoretic notion of logical
preorder coincides with an extensional behavioral relation adapted
from the process-theoretic notion of contextual preorder. The former
is defined exclusively in terms of traditional derivability, and the latter
is defined in terms of a CCS-like
process algebra inspired by the formula-as-process interpretation of a
fragment of linear logic.  In order to establish the connection, a key
ingredient is to introduce a coinductively defined simulation as a stepping
stone. It is interesting to see that coinduction, a central proof technique in
process algebras, is playing an important role in this study of linear logic.
This topic definitely deserves further investigation so that useful ideas
developed in one field can benefit the other, and vice versa.

We have started expanding the results in this paper by examining general
implication (i.e., formulas of the form $A \lolli B$ rather than $a \lolli B$)
and the usual quantifiers.  While special cases are naturally interpreted
into constructs found in the join calculus and the $\pi$-calculus, the
resulting language appears to extend well beyond them.  If successful, this
effort may lead to more expressive process algebras.  We are also interested
in understanding better the interplay of the proof techniques used in the
present work.  This may develop into an approach to employ coinduction
effectively in logical frameworks so as to facilitate formal reasoning and
verification of concurrent systems.

\bibliographystyle{eptcs}
\bibliography{references}
\end{document}